\newtheorem{theorem}{Theorem}[section]
\newtheorem{lemma}[theorem]{Lemma}
\newtheorem{corollary}[theorem]{Corollary}
\newtheorem{proposition}[theorem]{Proposition}
\newcommand{\T}{{\mathcal T}}
\newcommand{\cT}{{\mathcal T}}
\newcommand{\Var}{\mathop{\textrm{Var}}\nolimits}
\newcommand{\Cov}{\mathop{\textrm{Cov}}\nolimits}
\title[Distribution of phylogenetic diversity]{Distribution of phylogenetic diversity under random extinction}
\author{Be{\'a}ta Faller, Fabio Pardi and Mike Steel}
\thanks{We thank the NZ Marsden Fund (06-UOC-02) for
supporting this research.}
\address{European Bioinformatics Institute, Wellcome Trust Genome
Campus, Hixton, Cambridge, UK}
\email{pardi@ebi.ac.uk}
\address{Biomathematics Research Centre, Department of Mathematics and
  Statistics, University of Canterbury, Christchurch, New Zealand}
\email{fallerbeata@yahoo.com, m.steel@math.canterbury.ac.nz}
\subjclass{05C05; 92D15}
\keywords{biodiversity conservation, phylogenetic tree, central
limit theorem, field of bullets model of extinction}
\date{3 August 2007}
\begin{document}

\begin{abstract}
Phylogenetic diversity is a measure for describing how much of an
evolutionary tree is spanned by a subset of species.   If one
applies this to the (unknown) subset of current species that will
still be present at some future time, then this `future phylogenetic
diversity' provides a measure of the impact of various extinction
scenarios in biodiversity conservation. In this paper we study the
distribution of future phylogenetic diversity under a simple model
of extinction (a generalized `field of bullets' model). We show that
the distribution of future phylogenetic diversity converges to a
normal distribution as the number of species grows (under mild
conditions, which are necessary). We also describe an algorithm to
compute the distribution efficiently, provided the edge lengths are
integral, and briefly outline the significance of our findings for
biodiversity conservation.
\end{abstract}

\maketitle

Corresponding Author:

Mike Steel

Phone: +64-3-3667001, Ext. 7688

Fax: +64-3-3642587

Email: m.steel@math.canterbury.ac.nz

\newpage

\section{Introduction}

The current rapid rate of extinction of many diverse species has
focused attention on predicting the loss of future biodiversity.
There are numerous ways to measure the `biodiversity' of a group of
species, and one which recognises the evolutionary linkages between
taxa (for example, species) is phylogenetic diversity (\cite{fai92},
\cite{fai06}, \cite{moo}). Briefly, given a subset of taxa, the PD
(phylogenetic diversity) score of that subset is the sum of the
lengths of the edges of the evolutionary tree that connects this
subset (formal definitions are given shortly). Here the `length' of
an edge may refer to the amount of genetic change on that edge,
its temporal duration or perhaps other features (such as morphological
diversity).

Under the simplest models of speciation, each taxon has the same
probability of surviving until some future time, and the survival of
taxa are treated as independent events; this is a simple type of
`field of bullets' model (\cite{nee}, \cite{raup}, \cite{vaz}). This
model is quite restrictive (\cite{pur}) and a more realistic
extension allows each species to have its own survival probability
-- this is the model we study in this paper. Under this model, we
would like to be able to predict the PD score of the set of taxa
that survive. This `future PD' is a random variable with a
well-defined distribution, but to date, most attention has focused on
just its mean (that is, the expected PD score of the species that
survive). For example, the `Noah's Ark problem' (\cite{hart06, noah,
pardi}) attempts to maximize expected future PD by allocating
resources that increase the survival probabilities in a constrained
way. Clearly, one could consider other properties of the distribution
of future PD -- for example the probability (let us call it the
$PL_0$ value) that future PD is less than some critical lower limit
($L_0$). Given different conservation strategies, we may wish to
maximize expected PD or minimize the $PL_0$ value.   A natural
question is how are these two quantities related?

To address these sorts of questions, we need to know the full
distribution of future PD. In this paper, we show that for large
trees, future PD is (asymptotically) normally distributed. Given the
increasing trend in biology of constructing and analysing
phylogenetic trees that contain large numbers of species
($10^2-10^3$), we see this result as timely. Our work was also
motivated by the suggestive form of distributions obtained by
simulating future PD by sampling 12-leaf subtrees randomly from
64-leaf trees from Nee and May (\cite{nee}, see also \cite{vaz}). To
formally prove the normal limit law requires some care, as future PD
is not a sum of independent random variables (even though the
survival events for the taxa at the leaves are treated
independently); consequently, the usual central limit theory does not
immediately apply.

This limit law has some useful consequences for applications. For
example, it means that for a large tree, the $PL_0$ value can be
estimated by the area under a normal curve to the left of
$\frac{L_0-\mathbb{E}[PD]}{\sqrt{\Var[PD]}}$. In particular, we see
that the relation between the $PL_0$ value and expected future PD
($\mathbb{E}[PD]$) involves  scaling by the standard deviation of
future PD (so strategies that aim to maximize expected future PD may
not necessarily minimize the $PL_0$ value).

Our normal distribution result is asymptotic - that is, it holds for
large trees. However, it is also useful to have techniques for
calculating the exact PD distribution on any given tree. In
Section~\ref{algorithm}, we show how this may be achieved by a
polynomial time algorithm under the mild assumption that each edge
length is an integer multiple of some fixed length. In
Section~\ref{unroot}, we show how our results can be easily modified
to handle an `unrooted' form of PD that has also been considered in
the literature.

\subsection{Definitions and preliminaries}

Throughout this paper $X$ will denote a set of {\em taxa} (for
example, different species, different genera or populations of
the same species) and $X'$ will denote a subset of $X$.
A {\em rooted phylogenetic $X$--tree} is a rooted
tree in which (i) all edges are oriented away from the root, (ii) $X$ is
the set of leaves (vertices of the tree with no outgoing edges) and
(iii) every vertex except the leaves (and also possibly the root)
has at least two out-going edges (allowing the root to have just one
outgoing arc will be useful later). In systematic biology, these
trees are used to represent evolutionary development of the set $X$
of taxa from their common ancestor (the root of the tree), and the
orientation of the edges corresponds to temporal ordering. Given a
rooted phylogenetic $X$--tree $\T$, we let $E(\T)$ denote the set of
edges, and $E_P(\T)$ denote the set of pendant edges (edges
that are incident with a leaf).

Suppose we have a rooted phylogenetic $X$--tree $\T$ and a map
$\lambda$ that assigns a non-negative real-valued length $\lambda_e$
to each edge $e$ of $\T$. Given the pair $(\T, \lambda)$ and a
subset $X'$ of $X$, the {\em phylogenetic diversity} of $X'$, denoted
$PD_{(\T, \lambda)}(X')$ -- or, more briefly, $PD(X')$ -- is the sum
of the $\lambda_e$ values of all edges that lie on at least one path
between an element of $X'$ and the root of $\T$.
\begin{figure}[ht] \begin{center}
\resizebox{13cm}{!}{
\input{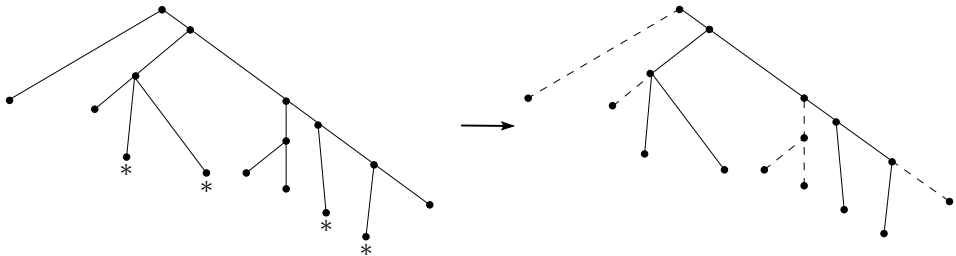}
} \caption{If only the taxa marked * in the tree on the left survive
then the future phylogenetic diversity is the sum of the lengths of
the solid edges in the tree on the right.}
\end{center}
\label{figure1}
\end{figure}
In the {\em (generalized) field of bullets model} (g-FOB), we have a
triple $(\T, \lambda, p)$ where $\T$ is a rooted phylogenetic
$X$--tree, $\lambda$ is an edge length assignment map, and $p$ is a
map that assigns to each leaf $i\in X$ a probability $p_i$.
Construct a random set $X'$ by assigning each element $i$ of $X$ to
$X'$ independently with probability $p_i$.  In biodiversity
conservation we regard $X'$ as the set of taxa that will still exist
(that is, not be extinct) at some time $t$ in the future;
accordingly, we call $p_i$ the {\em survival probability} of $i$.

Considering the random variable $\varphi =\varphi_\T =  PD_{(\T,
\lambda)}(X')$, which is the phylogenetic diversity of the random
subset $X'$ of $X$ (consisting of those taxa that `survive')
according to the process just described, we call $\varphi$  {\em
future phylogenetic diversity}. An example of this process is shown
in Fig. 1.

Note that in the g-FOB model, we can write
\begin{equation}
\label{fundy}
\varphi = \sum_e \lambda_eY_e,
\end{equation}
where $Y_e$ is the binary random variable which takes the value $1$
if $e$ lies on at least one path between an element of $X'$ and the
root of $\T$, and which is $0$ otherwise. Moreover,
\begin{equation}
\label{fundy2}
\mathbb{P}[Y_e = 1] = 1 - \prod_{i \in C(e)}(1-p_i),
\end{equation}
where $C(e)$ is the set of elements of $X$ that are separated from
the root of $\T$ by $e$. Consequently, if we let $$P_e: =
\mathbb{P}[Y_e=1] = 1-\prod_{i \in C(e)}(1-p_i),$$ then
\begin{equation}
\label{fundy3} \mathbb{E}[\varphi] =\sum_e\lambda_e P_e.
\end{equation}

Equation (\ref{fundy}) suggests that for large trees, $\varphi$ might
be normally distributed, as it will be sum of many random variables
(a normal distribution is also suggested by simulations described in
\cite{nee, vaz}). However, the random variables ($\lambda_eY_e$) are
not identically distributed and, more importantly, they are not
independent. Therefore a straightforward application of the (usual)
central limit theorem seems problematic. We show that under two mild
restrictions, a normal law can be established for large trees.
Moreover, neither of these two mild restrictions can be lifted (we
exhibit a counter-example to a normal law in both cases).

Since a normal distribution is determined once we know both its mean
and variance, it is useful to have equations for calculating both
these quantities.  Equation (\ref{fundy3}) provides a simple
expression for the mean, and we now present an  expression for the
variance that is also easy to compute. Given two distinct edges of
$\T$, we write $e <_{\T} f$ if the path from the root of
$\T$ to $f$ includes edge $e$ (or, equivalently,  $C(f) \subset C(e)$).

\begin{lemma}
\label{fundy4}
$$\Var[\varphi] = \sum_e \lambda_e^2P_e(1-P_e)  +
2\sum_{(e,f): e<_{\T}f} \lambda_e\lambda_f P_f(1-P_e).$$
\end{lemma}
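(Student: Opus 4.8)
The plan is to compute $\Var[\varphi]$ directly from the representation $\varphi = \sum_e \lambda_e Y_e$ in equation~(\ref{fundy}), expanding
$$\Var[\varphi] = \sum_e \lambda_e^2 \Var[Y_e] + 2\sum_{\{e,f\}} \lambda_e\lambda_f \Cov[Y_e, Y_f],$$
where the second sum runs over unordered pairs of distinct edges. Since $Y_e$ is a Bernoulli variable with $\mathbb{P}[Y_e=1] = P_e$, we have immediately $\Var[Y_e] = P_e(1-P_e)$, which accounts for the first term in the statement. So the whole content of the lemma is the evaluation of $\Cov[Y_e, Y_f]$ for each pair of distinct edges.

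First I would observe that $\Cov[Y_e, Y_f] = \mathbb{E}[Y_eY_f] - P_eP_f = \mathbb{P}[Y_e = 1 \text{ and } Y_f = 1] - P_eP_f$, and that $Y_eY_f = 1$ precisely when $X'$ contains at least one taxon below $e$ \emph{and} at least one taxon below $f$. The key structural observation is that the tree is rooted and the edges are oriented away from the root, so for two distinct edges $e,f$ exactly one of three cases holds: (i) $e <_\T f$, (ii) $f <_\T e$, or (iii) neither — i.e.\ $C(e)$ and $C(f)$ are disjoint (no leaf is below both). In case (iii), the events $\{Y_e=1\}$ and $\{Y_f=1\}$ depend on disjoint sets of leaves, hence are independent, so $\Cov[Y_e,Y_f]=0$ and such pairs contribute nothing — this is why only pairs with $e<_\T f$ (after symmetrizing) survive in the stated formula. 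In case (i), $C(f)\subseteq C(e)$, so $Y_f = 1$ forces $Y_e = 1$; hence $\mathbb{P}[Y_e=1, Y_f=1] = \mathbb{P}[Y_f=1] = P_f$, giving $\Cov[Y_e,Y_f] = P_f - P_eP_f = P_f(1-P_e)$. Summing the symmetrized contribution $2\lambda_e\lambda_f P_f(1-P_e)$ over all ordered pairs with $e<_\T f$ then yields the second term and completes the proof.

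I do not expect a genuine obstacle here; the argument is a short case analysis. The only point requiring a little care is the trichotomy: one must check that for a rooted tree with edges directed away from the root, two distinct edges cannot have $C(e)$ and $C(f)$ overlapping without one containing the other. This follows because the set of leaves below an edge corresponds to the leaf set of the subtree hanging off that edge, and two such subtrees in a rooted tree are either nested or vertex-disjoint (equivalently, their leaf sets are laminar). Once that is in hand, the covariance computation in the nested case reduces to the single implication $Y_f=1 \Rightarrow Y_e=1$, and everything else is bookkeeping. It may be worth remarking explicitly that the two sums together run over all edges and all (ordered, $<_\T$-comparable) pairs, so no term is double-counted or omitted.
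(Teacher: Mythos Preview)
Your proposal is correct and follows essentially the same approach as the paper: expand the variance of $\varphi=\sum_e\lambda_eY_e$ into a sum of covariances $\Cov[Y_e,Y_f]$, then use the trichotomy for distinct edges (nested one way, nested the other way, or incomparable with disjoint leaf sets and hence independent) together with the implication $Y_f=1\Rightarrow Y_e=1$ when $e<_\T f$. The only cosmetic difference is that you split off the diagonal terms as $\Var[Y_e]$ at the outset, whereas the paper treats $e=f$ as a fourth case of the covariance analysis.
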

\begin{proof}

>From Equation (\ref{fundy}) we have:
\[
\Var[\varphi]=\Cov[\varphi,\varphi]=\sum_{e,f}\lambda_e\lambda_f\Cov[Y_e,Y_f].
\]
The covariance of $Y_e$ and $Y_f$ is
$$\Cov[Y_e,Y_f]=\mathbb{E}[Y_e Y_f]-\mathbb{E}[Y_e]\mathbb{E}[Y_f]=\mathbb{P}[Y_e=1,Y_f=1]-\mathbb{P}[Y_e=1]\mathbb{P}[Y_f=1].$$
Now, we have the following cases:
\begin{itemize}
\item[(1)] \mbox{ } $e \neq f$ and neither $e<_{\T} f$ nor $f<_{\T} e$.  In this case, the subtree of $\T$ with root edge $e$ and the subtree of $\T$ with root edge $f$
do not have any leaves in common, and so $Y_e$ and $Y_f$ are independent. Thus, $\Cov[Y_e,Y_f]=0$.\\
\item [(2)] \mbox{ } $e<_{\T} f$. In this case, $C(f) \subset C(e)$ and so the survival of any taxon in $C(f)$ implies
the survival of a taxon in $C(e)$; that is, $Y_f=1$ implies $Y_e=1$
and we have
$\Cov[Y_e,Y_f]=\mathbb{P}[Y_f=1]-\mathbb{P}[Y_e=1]\mathbb{P}[Y_f=1] =P_f(1-P_e).$\\
\item[(3)]  \mbox{ } $f<_{\T} e$. This is analogous to case (2) (and, together with case (1), explains the factor of $2$ in the expression on the right-hand side of our formula for $\Var[\varphi]$). \\
\item[(4)] \mbox{ } $e=f$. This case gives $\Cov[Y_e,Y_f]=\mathbb{P}[Y_e=1](1-\mathbb{P}[Y_e=1]) =P_e(1-P_e)$
(and corresponds to the first term on the right hand-side of our formula for $\Var[\varphi]$).   \\
\end{itemize}
By considering these cases for $\Cov[Y_e,Y_f]$, we obtain the result claimed.
\end{proof}

A consequence of this lemma is the following lower bound on the
variance of future $PD$ which will be useful later.

\begin{corollary}
\label{lem1} Consider the g-FOB model on $(\T, \lambda, p)$.  Then,
$$
\Var[\varphi] \geq \sum_{e \in E_P(\T)}
\lambda_e^2P_e(1-P_e).$$
\end{corollary}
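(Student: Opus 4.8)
The plan is to derive the corollary directly from Lemma~\ref{fundy4} by discarding all terms that are manifestly non-negative. Writing out the formula,
\[
\Var[\varphi] = \sum_e \lambda_e^2 P_e(1-P_e) + 2\sum_{(e,f):\, e<_{\T}f} \lambda_e\lambda_f P_f(1-P_e),
\]
I first observe that every summand on the right is non-negative: the $\lambda_e$ are non-negative by hypothesis on the length assignment, and each $P_e$ is a probability, so $P_e(1-P_e)\ge 0$ and $P_f(1-P_e)\ge 0$. Hence the entire cross-term (the double sum over comparable pairs) may be dropped without increasing the right-hand side, giving
\[
\Var[\varphi] \;\ge\; \sum_e \lambda_e^2 P_e(1-P_e).
\]

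Next I would restrict the remaining single sum from all edges $e\in E(\T)$ to just the pendant edges $e\in E_P(\T)$. Again each term $\lambda_e^2 P_e(1-P_e)$ is non-negative, so
\[
\sum_e \lambda_e^2 P_e(1-P_e) \;\ge\; \sum_{e\in E_P(\T)} \lambda_e^2 P_e(1-P_e),
\]
and chaining the two inequalities yields the claimed bound. (One could optionally remark that for a pendant edge $e$ incident with leaf $i$ we have $C(e)=\{i\}$ and so $P_e=p_i$, which makes the lower bound $\sum_{i\in X}\lambda_{e_i}^2 p_i(1-p_i)$ in the notation of pendant edges, but this reformulation is not needed for the statement as given.)

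There is essentially no obstacle here: the only thing to be careful about is that the non-negativity of the cross-terms genuinely uses $\lambda_e\ge 0$ (true by the standing assumption that $\lambda$ assigns non-negative lengths) together with $0\le P_e\le 1$; if negative edge lengths were allowed the cross-terms could have either sign and the argument would fail. The proof is therefore just the two displayed lines of term-dropping, and I would present it in two or three sentences.
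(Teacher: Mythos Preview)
Your proof is correct and matches the paper's approach exactly: invoke Lemma~\ref{fundy4}, note that every term in both sums is non-negative (since $\lambda_e\ge 0$ and $P_e\in[0,1]$), and retain only the diagonal terms with $e=f$ a pendant edge. The paper states this in a single sentence, but your more explicit two-step term-dropping is the same argument.
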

\begin{proof}
Notice that all the terms in the summation
expression for $\Var[\varphi]$ in Lemma~\ref{fundy4} are
non-negative, and so a lower bound on $\Var[\varphi]$ is obtained by
summing over those pairs $(e,f)$ for which $e=f$ is a pendant edge
of $\T$. This gives the claimed bound.
\end{proof}

\section{Asymptotic normality of future phylogenetic diversity under the g-FOB model}

Consider a sequence of such rooted phylogenetic trees:
\[
\T_1,\T_2,\ldots,\T_n,\ldots
\]
where $\T_n$ has a leaf label set $X= \{1,\ldots, n\}$. Furthermore, suppose
that for each tree we have an associated edge length function
$\lambda = \lambda^{(n)}$ and a survival probability function $p =
p^{(n)}$. For the sequence of g-FOB models $(\T_n, \lambda^{(n)},
p^{(n)})$, we impose the following conditions (where
$E_P(\T_n)$ is the set of pendant edges of $\T_n$):

\begin{itemize}
\item[({\bf C1})] For some $\epsilon>0$ and for each $n$, we have:
$$\epsilon \leq p_i^{(n)} \leq 1-\epsilon,$$ for
all $i \in \{1, \ldots, n\}$ except for at most $An^{\alpha}$ values of $i$, where $A, \alpha \geq 0$ are constants, with
$\alpha<\frac{1}{2}$.

\item[({\bf C2})] Let $L(n) = \max\{\lambda_e^{(n)}: e \in
E(\T_n)\}$. Then, for each $n$, we have: $$\sum_{e \in
E_P(\T_n)}\left(\lambda_e^{(n)}\right)^2 \geq B n^{\beta}L(n)^2,$$
for some constants $B>0, \beta > 2\alpha$.
\end{itemize}

\noindent{\bf Remarks concerning conditions (C1), (C2).}

Condition (C1) simply says that the survival of most taxa is
neither (arbitrarily close to) certain nor impossible. The term
$An^{\alpha}$ provides the flexibility to allow for some of the taxa
to have a survival probability that is very close to, or even equal
to, 0 or $1$.

Condition (C2)  says, roughly speaking, that the pendant edges are,
on average, not too short in relation to the longest edge in the
tree.  This is relevant for evolutionary biology, as it follows that
for trees generated by a constant speciation rate `pure birth' model
(see, for example, \cite{yule}) condition (C2) holds in expectation
(for any $\alpha \in (0,\frac{1}{2})$). A more formal statement of
this claim, and its proof, is given in the Appendix.

Note that if condition (C2) holds for a value $\beta>0$ then,
$\beta$ is at most $1$, since the terms in the summation expression
in (C2) are all at most $1$ and there are $O(n)$ of them.

\hfill $\Box$

Next, we state our main theorem, which describes the asymptotic
normality of future phylogenetic diversity $\varphi_n =
\varphi_{\T_n}$. Since phylogenetic trees often contain a large
number of taxa, the result allows one to approximate the
distribution of future phylogenetic diversity with a normal
distribution.

\begin{theorem}\label{tetelegy}
Under conditions (C1) and (C2), $(\varphi_n-\mathbb{E}[\varphi_n])/\sqrt{\Var[\varphi_n]}$ converges in distribution to $N(0,1)$ as $n\to\infty$,
where $N(0,1)$ denotes a standard normally distributed random variable.
\end{theorem}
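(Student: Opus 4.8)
The plan is to prove asymptotic normality by decomposing $\varphi_n$ into a sum of independent contributions, one for each leaf, and then verifying Lyapunov's condition for that sum. Fix a tree $\T_n$ and recall $\varphi_n = \sum_e \lambda_e Y_e$. The key observation is that the event $\{Y_e = 1\}$ depends only on the survival indicators of the leaves in $C(e)$. If we process the leaves in some fixed order $1, 2, \ldots, n$ and write $S_i$ for the (zero–one) survival indicator of leaf $i$, then $\varphi_n$ is a function of the independent random variables $S_1, \ldots, S_n$, and we can form the Doob martingale $M_i = \mathbb{E}[\varphi_n \mid S_1, \ldots, S_i]$ with martingale differences $D_i = M_i - M_{i-1}$. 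Thus $\varphi_n - \mathbb{E}[\varphi_n] = \sum_{i=1}^n D_i$ is a sum of martingale differences, and the natural tool is the martingale central limit theorem. Alternatively — and this is probably cleaner — one can exploit the tree structure directly: order the leaves so that the increments telescope along subtrees, and observe that $D_i$ is, up to lower-order terms, governed by the pendant edge at leaf $i$ together with the path from that leaf towards the root, conditioned on the earlier leaves.

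The heart of the argument is a two-sided estimate. For the lower bound on the variance we already have Corollary~\ref{lem1}: $\Var[\varphi_n] \geq \sum_{e \in E_P(\T_n)} \lambda_e^2 P_e(1-P_e)$. Under (C1), all but $An^\alpha$ of the pendant edges have $P_e(1-P_e) \geq \epsilon(1-\epsilon) =: c > 0$ (a pendant edge at leaf $i$ has $P_e = p_i$), so combining with (C2) we get
\[
\Var[\varphi_n] \;\geq\; c \sum_{e \in E_P(\T_n), \text{ ``good''}} \lambda_e^2 \;\geq\; c\,\bigl(B n^\beta - An^\alpha\bigr) \cdot (\text{comparison with } L(n)^2 \text{ term}),
\]
so that $\Var[\varphi_n] = \Omega(n^\beta L(n)^2)$ for large $n$, using $\beta > 2\alpha$ to absorb the exceptional leaves. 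For the Lyapunov numerator one needs a bound of the form $\sum_i \mathbb{E}[|D_i|^3] = o\bigl((\Var[\varphi_n])^{3/2}\bigr)$, or the simpler third-moment bound $\sum_i \mathbb{E}[|D_i|^3] \leq \max_i\|D_i\|_\infty \cdot \sum_i \mathbb{E}[D_i^2] = \max_i\|D_i\|_\infty \cdot \Var[\varphi_n]$. Here $\|D_i\|_\infty$ is at most the sum of $\lambda_e$ over edges $e$ on the path from leaf $i$ to the root whose $Y_e$-value can be flipped by toggling $S_i$ — a quantity that is at most (depth of tree) $\times L(n)$ in the worst case, but more carefully is controlled because each such edge $e$ contributes $\lambda_e$ times a conditional probability that decays. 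The Lyapunov ratio is then $O\bigl(\max_i\|D_i\|_\infty / \sqrt{\Var[\varphi_n]}\bigr) = O\bigl(\text{poly-depth} \cdot L(n) / (n^{\beta/2} L(n))\bigr) \to 0$, where one must check that the tree depth (or, more precisely, the relevant path-length sums) cannot grow as fast as $n^{\beta/2}$ — and indeed, since there are $O(n)$ edges each of length at most $L(n)$, and (C2) pins $\beta \leq 1$, the bookkeeping must be done with some care but goes through.

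The main obstacle, as flagged above, is precisely controlling $\|D_i\|_\infty$ (equivalently, the sensitivity of $\varphi_n$ to a single leaf) and showing it is negligible compared to $\sqrt{\Var[\varphi_n]} \asymp n^{\beta/2} L(n)$. A single leaf $i$ that survives can, in principle, switch on a whole root-to-leaf path of edges, so its influence is as large as $\sum_{e : i \in C(e)} \lambda_e$, which could be of order $n \cdot L(n)$ for a very unbalanced (caterpillar-like) tree — far too big. The resolution is that this worst case cannot happen simultaneously with a small variance: if many long edges lie above leaf $i$, they also lie above many other leaves, and conditioning on those other leaves having already been examined makes the conditional probability that $i$'s survival is \emph{pivotal} for edge $e$ equal to $\prod_{j \in C(e) \setminus \{i\}, j \leq i}(1-p_j)$ or similar, which is exponentially small once enough leaves below $e$ have been processed. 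So the correct statement is not a crude bound on $\|D_i\|_\infty$ but a bound on $\mathbb{E}[|D_i|^3]$ that uses these conditional-probability factors; with the right choice of leaf ordering (e.g. a depth-first or ``peeling'' order that handles each subtree contiguously) these factors telescope and the sum $\sum_i \mathbb{E}[|D_i|^3]$ becomes comparable to $L(n) \cdot \Var[\varphi_n]$ up to a constant, after which (C1)–(C2) finish the job. Verifying this telescoping and the exponential decay rigorously is where the real work lies; everything else is routine application of the martingale CLT (or the Lindeberg–Feller theorem applied to the independent-leaf decomposition).
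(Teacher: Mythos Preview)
Your approach is genuinely different from the paper's, and the step you yourself flag as ``where the real work lies'' is a real gap that is not obviously fillable as stated.

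The paper does \emph{not} use a leaf-by-leaf martingale decomposition. Instead it decomposes the tree spatially. After rescaling so that $L(n)=1$, fix $\gamma\in(\alpha,\beta/2)$ and set $f(n)=n^\gamma$. The edges $e$ with more than $f(n)$ leaves below them form a ``core''; their total contribution to $\varphi_n$ equals their total length $\lambda_n$ minus a remainder $R_n=\sum_{e\in E_2^n}\lambda_e(1-Y_e)$, and one shows $R_n\to 0$ in probability (each core edge has at least $n^\gamma-An^\alpha$ non-degenerate leaves below it, so $\mathbb{P}[Y_e=0]$ is exponentially small). Deleting the core leaves a forest of pendant subtrees $\{t_e:e\in E_{12}^n\}$, each with at most $f(n)$ leaves and hence at most $2f(n)$ edges. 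The key point is that the subtree contributions $\varphi_e^n$ are \emph{genuinely independent} --- not merely martingale differences --- so the Lyapunov CLT for triangular arrays applies directly: each summand satisfies $|\varphi_e^n-\mathbb{E}[\varphi_e^n]|\le 2n^\gamma$, while $\sum_e\Var[\varphi_e^n]\ge c\,n^\beta$ by Corollary~\ref{lem1} and Lemma~\ref{pep}. The Lyapunov ratio is $O(n^{1+p(\gamma-\beta/2)})$, which vanishes for $p$ large since $\gamma<\beta/2$. Slutsky's theorem then transfers the CLT from $Z_n=\sum_e\varphi_e^n$ to $\varphi_n=\lambda_n+Z_n-R_n$.

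Your martingale route runs into two problems that this decomposition sidesteps. First, your claim that $\sum_i\mathbb{E}[|D_i|^3]$ is ``comparable to $L(n)\cdot\Var[\varphi_n]$'' via telescoping is not justified; in a caterpillar with the $An^\alpha$ exceptional leaves clustered along one path, the sensitivity $W_i$ of a single leaf can have $\mathbb{E}[W_i^3]/\mathbb{E}[W_i^2]$ of order $L(n)\,n^\alpha$ rather than $L(n)$ (this would still suffice, since $\beta>2\alpha$, but the stated heuristic is wrong and you have not actually carried out any such computation). Second, and more seriously, the martingale CLT requires the \emph{conditional} variances $\sum_i\mathbb{E}[D_i^2\mid\mathcal{F}_{i-1}]/\Var[\varphi_n]$ to converge to $1$ in probability, which you do not address at all and which is not automatic for this filtration. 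Your parenthetical fallback (``the Lindeberg--Feller theorem applied to the independent-leaf decomposition'') does not help either: the $S_i$ are independent, but $\varphi_n$ is not a sum of functions of individual $S_i$'s, so there is no independent-sum structure at the leaf level. The paper's trick of grouping leaves into small independent \emph{subtrees} is precisely what creates such a structure.
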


We pause to note that one cannot drop either condition (C1) or (C2)
in Theorem~\ref{tetelegy}. It is clear that dropping (C1) is
problematic (for example, set $p_i^{(n)} \in \{0,1\}$ for all $i$ which leads to
a degenerate distribution); as for (C2) the following example shows that we require $\beta$ to
be strictly positive.

\noindent{\bf Example: Condition (C2) cannot be removed}

Consider a tree $\T_{n}$ with $n$ leaves. Leaves $1,\ldots, n-1$
have incident edges that each have length $\frac{1}{\sqrt{n-1}}$ and
all these edges are incident with a vertex that is adjacent to the
root by an edge of length $1$.  Leaf $n$ has edge length $1$ (see
Fig.~\ref{figure2}). Consider a sequence of g-FOB models with
$p_i^{(n)} = s$ for all $i, n$, where $s$ is any number strictly
between $0$ and $1$. Then $\varphi_n =
\frac{1}{\sqrt{n-1}}A_n+B_n+C_n$ where $\frac{1}{\sqrt{n-1}}A_n$ is
the contribution to $\varphi_n$ of the $n-1$ edges that are incident
with leaves $1, \ldots, n-1$; $B_n$ is the contribution to
$\varphi_n$ of the edge that connects these $n-1$ edges to the root
of $\T_{n}$ and $C_n$ is the contribution to $\varphi_n$  of the
edge incident with leaf $n$. Notice that $A_n$ is a sum of $n-1$
i.i.d. binary ($0,1$) random variables, each of which takes the
value $1$ with probability $s$, and  $C_n$ is a binary random
variable which takes the value $1$ with probability $s$.
Consequently, the variance of $\frac{1}{\sqrt{n-1}}A_n$ equals
$s(1-s)$, the same as the variance of $C_n$. Moreover, $B_n$
converges in probability to 1, and $C_n$ is independent of $A_n$ and
$B_n$. Consequently, $\Var[\varphi_n] \rightarrow 2s(1-s)$ as $n
\rightarrow \infty$. Furthermore, by the standard central limit
theorem, $\frac{\frac{1}{\sqrt{n-1}}A_n -
\mathbb{E}[\frac{1}{\sqrt{n-1}}A_n]}{\sqrt{2s(1-s)}}$ converges in
distribution to $N(0, \frac{1}{2})$ (a normal random variable with
mean $0$ and variance $\frac{1}{2}$). Thus,
$(\varphi_n-\mathbb{E}[\varphi_n])/\sqrt{\Var[\varphi_n]}$ converges
to the random variable $N(0,\frac{1}{2}) + W$ where $W$ is
independent of $N(0,\frac{1}{2})$ and takes the value
$\frac{1-s}{\sqrt{2s(1-s)}}$ with probability $s$ and takes the
value $\frac{-s}{\sqrt{2s(1-s)}}$  with probability $1-s$.  In
particular,
$(\varphi_n-\mathbb{E}[\varphi_n])/\sqrt{\Var[\varphi_n]}$ does not
converge in distribution to $N(0,1)$. Notice that in this example,
(C1) is satisfied, but (C2) fails since $\sum_{e \in
E_P(\T_n)}(\lambda_e^{(n)})^2 =2L(n)^2$.

\begin{figure}[ht] \begin{center} \label{figure2}
\resizebox{5.5cm}{!}{
\input{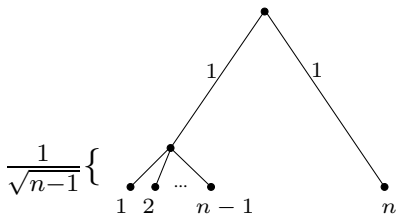}
} \caption{A tree for which future phylogenetic diversity does not
become normally distributed as $n$ grows.}
\end{center}
\end{figure}
\hfill$\Box$

We now provide a brief, informal outline of the approach we use to
prove Theorem~\ref{tetelegy}. The main idea is to decompose $\T_n$
into a `central core' and a large number of `moderately small'
pendant subtrees.  Each edge in the central core separates the root
from enough leaves so that we can be very sure that at least
one of these leaves survives -- consequently the combined
PD-contribution of this central core converges in probability to a
fixed (non-random) function of $n$.  Regarding the pendant subtrees,
their contributions to the PD score are independent and although
they are not identically distributed random variables, their
combined variance grows sufficiently quickly that we can establish a
normal law for their sum by a standard central limit theorem.

\begin{proof}[Proof of Theorem~\ref{tetelegy}]

We first note that it is sufficient to establish Theorem \ref{tetelegy} under (C1) and the seemingly stronger condition:

(C2$^*$) $L(n)=1$, and $\sum_{e \in E_P(\T_n)}(\lambda_e^{(n)})^2
\geq Bn^\beta$ for constants $B>0, \beta >2\alpha$.

To see why, suppose we have established Theorem~\ref{tetelegy} under
(C1), (C2$^*$). For a sequence $\T_n$ (with associated maps
$\lambda^{(n)}$, $p^{(n)}$) satisfying (C1), (C2), let $\mu_e^{(n)}
= L(n)^{-1}\lambda_e^{(n)}$ for each edge $e$ of $\T_n$ and each
$n$. Note that, by Equation~(\ref{fundy}), the normalized $\varphi$
score (namely
$(\varphi_n-\mathbb{E}[\varphi_n])/\sqrt{\Var[\varphi_n]}$)) for
$(\T_n, \mu^{(n)}, p^{(n)})$ equals the normalized $\varphi$ score
for $(\T_n, \lambda^{(n)}, p^{(n)})$  and that $(\T_n, \mu^{(n)},
p^{(n)})$ satisfies (C2$^*$). Thus we will henceforth
assume conditions (C1) and (C2$^*$).

Next, we make a notational simplification: for the remainder of the
proof, we will write $\lambda_e^{(n)}$ as $\lambda_e$ and $p^{(n)}_i$
as $p_i$ (but respecting in the proof that these quantities depend
on $n$). Also, for a sequence of random variables ($Y_n$), we write
$Y_n \xrightarrow{P} a$ to denote that $Y_n$ converges in
probability to a constant $a$, and $Y_n \xrightarrow{D} Y$ to denote
that $Y_n$ converges in distribution to a random variable $Y$.

Since $\beta > 2\alpha$, we may select a value $\gamma$ with
$\alpha < \gamma < \beta/2$, and set $f(n):=n^\gamma$. We partition
the edges of $\T_n$ into two classes $E^n_1$ and $E^n_2$ and we
define a third class $E^n_{12}\subseteq E^n_1$ as follows: Let $n_e$
denote the number of leaves of $\T_n$ that are separated from the
root by $e$. Then set:
\begin{itemize}
\item
$E^n_1$: edges $e$ of $\T_n$ with $n_e\leq f(n)$;
\item
$E^n_2$: edges $e$ of $\T_n$ with $n_e>f(n)$;
\item
$E^n_{12}$: edges $e\in E^n_1$ such that $e$ is adjacent to an edge $f\in E^n_2$.
\end{itemize}

For an edge $e\in E^n_{12}$ of $\T_n$, we make the following
definitions:

\begin{itemize}
\item
$t_e$ denotes the subtree of $\T_n$ consisting of edge $e$ and
all other edges of $\T_n$ that are separated from the root by $e$.
\item
$\varphi_e^n$ denotes the future phylogenetic diversity of
$t_e$,  under the
probabilistic model described above.
\end{itemize}

See Fig. 3  for a
schematic summary of these concepts.

\begin{figure}[ht] \begin{center} \label{figure3}
\resizebox{8cm}{!}{
\input{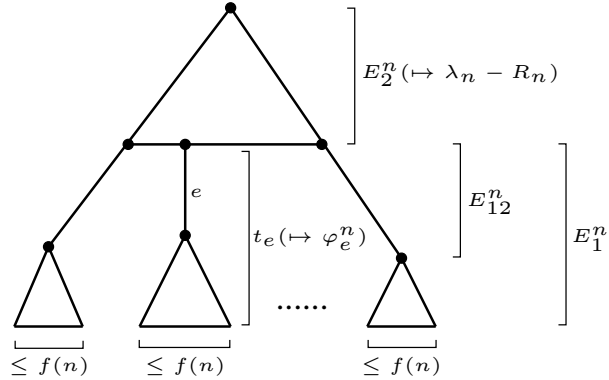}
} \caption{A representation of the decomposition of $\T_n$ in the
proof of Theorem~\ref{tetelegy}.}
\end{center}
\end{figure}

For $\varphi_n$,  Equation (\ref{fundy}) gives
\begin{equation}
\label{deco}
\varphi_n=\sum_{e \in E^n_1} \lambda_eY_e+ \sum_{e \in E^n_2} \lambda_eY_e =
\sum_{e\in E^n_{12}}\varphi_e^n+\sum_{e\in E^n_2}\lambda_eY_e.
\end{equation}
Let $$\lambda_n=\sum_{e\in E^n_2}\lambda_e, Z_n=\sum_{e\in
E^n_{12}}\varphi_e^n, \mbox{  and  } R_n=\sum_{e\in
E^n_2}\lambda_e(1-Y_e).$$ With this notation, we can re-write
(\ref{deco}) as
\begin{center}
\begin{equation}
\varphi_n=\lambda_n+Z_n-R_n.\label{kepletegy}
\end{equation}
\end{center}
\begin{lemma}
\label{lem2}
$R_n \xrightarrow{P} 0$.
\end{lemma}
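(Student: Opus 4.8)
The plan is to show that $R_n = \sum_{e \in E^n_2}\lambda_e(1-Y_e)$ converges to $0$ in probability by bounding its expectation and invoking Markov's inequality. Since $R_n \geq 0$, it suffices to show $\mathbb{E}[R_n] \to 0$. By linearity, $\mathbb{E}[R_n] = \sum_{e \in E^n_2}\lambda_e(1-P_e)$, where $1-P_e = \prod_{i \in C(e)}(1-p_i)$ and, by definition of $E^n_2$, each such edge $e$ has $n_e = |C(e)| > f(n) = n^\gamma$.

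The key step is to control $\prod_{i \in C(e)}(1-p_i)$ from above using condition (C1). Among the $n_e$ leaves in $C(e)$, at most $An^\alpha$ of them can fail the bound $p_i \geq \epsilon$; the remaining at least $n_e - An^\alpha$ leaves satisfy $1-p_i \leq 1-\epsilon$. Hence $1-P_e \leq (1-\epsilon)^{n_e - An^\alpha}$. Since $\gamma > \alpha$, for $n$ large we have $n_e > n^\gamma \geq 2An^\alpha$, say, so $n_e - An^\alpha \geq \tfrac{1}{2}n^\gamma$, giving the uniform bound $1-P_e \leq (1-\epsilon)^{n^\gamma/2}$ for all $e \in E^n_2$. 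Using $\lambda_e \leq L(n) = 1$ (we are working under (C2$^*$)) and the fact that $\T_n$ has fewer than $2n$ edges, we get
\[
\mathbb{E}[R_n] \;\leq\; 2n\,(1-\epsilon)^{n^\gamma/2}.
\]
This tends to $0$ as $n \to \infty$, because the exponential decay in $n^\gamma$ (with $\gamma > 0$) dominates the polynomial factor $2n$. Then for any $\delta > 0$, Markov's inequality gives $\mathbb{P}[R_n > \delta] \leq \mathbb{E}[R_n]/\delta \to 0$, which is precisely $R_n \xrightarrow{P} 0$.

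The only mildly delicate point — and the main thing to get right — is the bookkeeping on the exceptional leaves in (C1): one must observe that the $An^\alpha$ exceptional leaves are fixed once $n$ is fixed, so the bound $n_e - An^\alpha$ on the number of "good" leaves inside $C(e)$ holds simultaneously for every edge $e \in E^n_2$, and that $\gamma > \alpha$ is exactly what makes $n_e - An^\alpha$ grow like $n^\gamma$ rather than collapsing. Everything else is a routine application of linearity of expectation, the bound $\lambda_e \leq 1$, the crude edge count $|E(\T_n)| < 2n$, and Markov's inequality; no independence structure beyond what is already encoded in $P_e$ is needed.
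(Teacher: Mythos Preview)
Your argument is correct for the lemma as stated: since $R_n \geq 0$, bounding $\mathbb{E}[R_n]$ and applying Markov's inequality is perfectly adequate, and your estimate $\mathbb{E}[R_n] \leq 2n\,(1-\epsilon)^{n^\gamma/2} \to 0$ is sound under (C1) and (C2$^*$).

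The paper takes a slightly different route: it bounds the \emph{second} moment, showing $\mathbb{E}[R_n^2] \leq |E_2^n|^2 \max_{e \in E_2^n}\mathbb{P}[Y_e=0] \leq 4n^2(1-\epsilon)^{n^\gamma - An^\alpha} \to 0$, and then invokes Chebyshev. The underlying estimate on $\mathbb{P}[Y_e=0]=1-P_e$ is identical to yours; the only difference is first versus second moment. Your version is the more elementary one. However, note that the paper's second-moment bound yields the extra conclusion $\Var[R_n] \to 0$, and this is actually used later in the proof of Theorem~\ref{tetelegy} (in showing $C_n \to 1$). Your first-moment argument does not give $\Var[R_n] \to 0$ by itself, so if you adopt your approach you will need to add one line at that later point---for instance, repeat the same exponential bound on $\mathbb{E}[R_n^2]$, or simply observe that $\Var[R_n] \leq \mathbb{E}[R_n^2] \leq |E_2^n|\sum_{e\in E_2^n}(1-P_e)$, which tends to $0$ by the same computation.
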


\begin{proof}
Since $\Var[R_n]=\mathbb{E}[R_n^2]-\mathbb{E}[R_n]^2$ and
$\mathbb{E}[R_n^2]\geq\mathbb{E}[R_n]^2$, it is sufficient to show that
$\mathbb{E}[R_n^2]\to 0$ (the claim that $R_n \xrightarrow{P} 0$ then follows by Chebyshev's inequality). We
have $R_n=\sum_{e\in E^n_2}\lambda_e(1-Y_e)$ and so

\[
R_n^2=\sum_{e,f\in E^n_2}\lambda_e\lambda_f(1-Y_e)(1-Y_f)\leq|E^n_2|\sum_{e\in E^n_2}(1-Y_e),
\]

since $\lambda_e, \lambda_f \leq 1$ by (C2$^*$), and $(1-Y_f) \leq 1$ for all $f \in E^n_2$. Thus,

\begin{equation}
\label{bounonE}
\mathbb{E}[R_n^2]\leq |E^n_2|^2 \cdot \max\{\mathbb{P}[Y_e=0]: e \in E^n_2\}.
\end{equation}
Now, for any edge $e \in E^n_2$ there are at least $n^\gamma -
An^\alpha$ elements $i$ of $C(e)$ for which $p_i \geq \epsilon$ (by
(C1)) and thus $$\mathbb{P}[Y_e=0] \leq (1-\epsilon)^{n^\gamma -
An^\alpha}.$$ Since $|E^n_2|< 2n$, Equation (\ref{bounonE}) and the inequality $\alpha < \gamma$ gives
\[
\mathbb{E}[R_n^2]\leq 4n^2\cdot(1-\epsilon)^{n^\gamma - An^\alpha}
\rightarrow 0 \mbox{ as } n \rightarrow \infty,
\]
as required.
\end{proof}

\begin{lemma}
\label{pep} Under conditions (C1) and (C2$^*$), we have
 $$\sum_{e \in E_P(\T_n)}(\lambda_e^{(n)})^2 P_e(1-P_e)
\geq B\epsilon^2(1+o(1))n^{\beta},$$ where $o(1)$ denotes a term that tends
to $0$ as $n\rightarrow \infty$.
\end{lemma}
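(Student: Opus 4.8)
The plan is to discard all terms except those coming from pendant edges whose leaves have survival probability bounded away from $0$ and $1$; on those the factor $P_e(1-P_e)$ is at least $\epsilon^2$, and condition (C2$^*$) will show that the surviving terms still carry almost all of the mass $Bn^\beta$.

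First I would record the elementary fact that if $e$ is a pendant edge of $\T_n$ incident with the leaf $i$, then $C(e)=\{i\}$, so by Equation~(\ref{fundy2}) and the definition of $P_e$ we have $P_e = p_i$ and hence $P_e(1-P_e)=p_i(1-p_i)$. Call such an edge (and the leaf $i$) \emph{good} if $\epsilon \le p_i^{(n)}\le 1-\epsilon$, and \emph{bad} otherwise; by (C1) there are at most $An^{\alpha}$ bad leaves, hence at most $An^{\alpha}$ bad pendant edges (each pendant edge is incident with a unique leaf). For a good pendant edge the bounds $p_i\ge\epsilon$ and $1-p_i\ge\epsilon$ give $P_e(1-P_e)\ge\epsilon^2$, so dropping the nonnegative contributions of the bad pendant edges yields
\[
\sum_{e \in E_P(\T_n)}\bigl(\lambda_e^{(n)}\bigr)^2 P_e(1-P_e)\;\ge\;\epsilon^2\sum_{e\ \mathrm{good}}\bigl(\lambda_e^{(n)}\bigr)^2 .
\]

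Next I would bound $\sum_{e\ \mathrm{good}}(\lambda_e^{(n)})^2$ from below by writing it as $\sum_{e\in E_P(\T_n)}(\lambda_e^{(n)})^2-\sum_{e\ \mathrm{bad}}(\lambda_e^{(n)})^2$. The first sum is at least $Bn^{\beta}$ by (C2$^*$); and since there are at most $An^{\alpha}$ bad pendant edges, each with $(\lambda_e^{(n)})^2\le L(n)^2=1$ (again by (C2$^*$)), the subtracted sum is at most $An^{\alpha}$. Thus $\sum_{e\ \mathrm{good}}(\lambda_e^{(n)})^2\ge Bn^{\beta}-An^{\alpha}=Bn^{\beta}\bigl(1-\tfrac{A}{B}n^{\alpha-\beta}\bigr)$. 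Since $\alpha\ge 0$ and $\beta>2\alpha$ we have $\beta>\alpha$, so $n^{\alpha-\beta}\to 0$ and the bracket is $1+o(1)$; feeding this into the displayed inequality gives $\sum_{e\in E_P(\T_n)}(\lambda_e^{(n)})^2 P_e(1-P_e)\ge B\epsilon^2(1+o(1))n^{\beta}$, as claimed.

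I do not expect a genuine obstacle here: the argument is a short chain of elementary estimates. The only points needing a moment's care are the identity $P_e=p_i$ for pendant edges (so that (C1) directly controls $P_e(1-P_e)$), the use of $L(n)=1$ from (C2$^*$) to make each discarded bad term at most a constant, and the observation that the strict inequality $\alpha<\beta$ — forced by $\alpha\ge 0$ and $\beta>2\alpha$ — is exactly what makes the $O(n^{\alpha})$ correction negligible against $n^{\beta}$.
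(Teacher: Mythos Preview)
Your proof is correct and follows essentially the same approach as the paper: your good/bad partition is exactly the paper's $U_n/V_n$ split, and the chain of inequalities $\sum_e(\lambda_e^{(n)})^2P_e(1-P_e)\ge\epsilon^2\sum_{U_n}(\lambda_e^{(n)})^2\ge\epsilon^2(Bn^\beta-An^\alpha)$ is identical. You are in fact slightly more explicit than the paper in justifying $\alpha<\beta$ (hence $An^\alpha=o(n^\beta)$), which the paper uses without comment.
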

\begin{proof}
Let $U_n$ be the set of those pendant edges $e$ of $\T_n$ for which
the leaf incident with $e$ has its survival probability in the
interval $[\epsilon, 1-\epsilon]$, and let $V_n$ denote the set of the
remaining pendant edges of $\T_n$. Clearly,
\begin{equation}
\label{pep1} \sum_{e \in E_P(\T_n)}(\lambda_e^{(n)})^2 P_e(1-P_e)
\geq \epsilon^2 \sum_{e \in U_n}(\lambda_e^{(n)})^2,
\end{equation}
and by (C2$^*$) we have
\begin{equation}
\label{pep2} Bn^\beta \leq \sum_{e \in E_P(\T_n)}(\lambda_e^{(n)})^2
\leq \sum_{e \in U_n}(\lambda_e^{(n)})^2 + |V_n|
\end{equation}
where the last term ($|V_n|$) is an upper bound on $\sum_{e \in
V_n}(\lambda_e^{(n)})^2$ by virtue of the bound $|\lambda_e^{(n)}|
\leq 1$ (by (C2$^*$)).  Since $|V_n| \leq An^\alpha$, Equations
(\ref{pep1}) and (\ref{pep2}) give
$$\sum_{e \in E_P(\T_n)}(\lambda_e^{(n)})^2 P_e(1-P_e) \geq \epsilon^2(Bn^\beta - An^\alpha) = B\epsilon^2(1+o(1))n^{\beta}.$$

\end{proof}

\begin{lemma}\label{lemmaketto}
The random variable $\psi_n=(Z_n-\mathbb{E}[Z_n])/\sqrt{\Var[Z_n]}
\xrightarrow{D} N(0,1)$.
\end{lemma}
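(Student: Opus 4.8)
The plan is to observe that $Z_n$ is a sum of \emph{independent} random variables and then invoke the Lindeberg--Feller central limit theorem, using the variance estimates in Corollary~\ref{lem1} and Lemma~\ref{pep} to verify its hypotheses.

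First I would pin down the combinatorial structure of the decomposition. Each $e\in E^n_{12}$ lies in $E^n_1$, so $n_e\le f(n)=n^\gamma$; its child edges $g$ satisfy $n_g<n_e\le f(n)$ and hence lie in $E^n_1$ too, so the $E^n_2$-neighbour of $e$ must be its parent edge. A short argument then shows that distinct members of $E^n_{12}$ are incomparable under $<_{\T}$ (if $e'<_{\T}e$ then $\mathrm{parent}(e)$ would lie in the subtree below $e'$, forcing $\mathrm{parent}(e)\in E^n_1$, a contradiction), so the subtrees $\{t_e:e\in E^n_{12}\}$ are pairwise edge- and leaf-disjoint; consequently the variables $\{\varphi_e^n:e\in E^n_{12}\}$ are independent and $\Var[Z_n]=\sum_{e\in E^n_{12}}\Var[\varphi_e^n]$. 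I would also note that, for $n$ large enough that $1\le f(n)<n$, every pendant edge $g$ of $\T_n$ lies in exactly one $t_e$ with $e\in E^n_{12}$: following the path from $g$ towards the root, $g\in E^n_1$ while the root edge is in $E^n_2$, so the last $E^n_1$-edge on this path is the required $e$. Thus $E_P(\T_n)=\bigcup_{e\in E^n_{12}}E_P(t_e)$ is a partition, and for a pendant edge $g$ of $t_e$ the quantity $P_g$ is the same whether computed inside $t_e$ or inside $\T_n$, since $C(g)$ is unchanged.

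Applying Corollary~\ref{lem1} to each $t_e$ gives $\Var[\varphi_e^n]\ge\sum_{g\in E_P(t_e)}\lambda_g^2P_g(1-P_g)$; summing over $e\in E^n_{12}$, using the partition above and Lemma~\ref{pep},
\[
\Var[Z_n]\ \ge\ \sum_{g\in E_P(\T_n)}\lambda_g^2P_g(1-P_g)\ \ge\ B\epsilon^2(1+o(1))n^\beta .
\]
For the individual summands, since $e\in E^n_1$ the subtree $t_e$ has fewer than $2n_e\le 2n^\gamma$ edges, each of length at most $L(n)=1$ by (C2$^*$), so $0\le\varphi_e^n\le 2n^\gamma$ and hence $|\varphi_e^n-\mathbb{E}[\varphi_e^n]|\le 2n^\gamma$. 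To verify the Lindeberg condition for the triangular array $(\varphi_e^n-\mathbb{E}[\varphi_e^n])_{e\in E^n_{12}}$, fix $t>0$; then
\[
\frac{2n^\gamma}{t\sqrt{\Var[Z_n]}}\ \le\ \frac{2n^\gamma}{t\,\epsilon\sqrt{B(1+o(1))}\;n^{\beta/2}}\ \longrightarrow\ 0
\]
because $\gamma<\beta/2$, so for all large $n$ no centred summand exceeds $t\sqrt{\Var[Z_n]}$ in absolute value and the Lindeberg sum is identically zero. Since moreover $\Var[Z_n]\to\infty$ (as $\beta>2\alpha\ge 0$), the Lindeberg--Feller theorem gives $\psi_n\xrightarrow{D}N(0,1)$.

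The only genuinely delicate step is the combinatorial bookkeeping in the second paragraph: establishing that the $t_e$ are leaf-disjoint (hence the $\varphi_e^n$ independent) and that their pendant-edge sets partition $E_P(\T_n)$ with matching survival probabilities, so that the growth bound of Lemma~\ref{pep} transfers to a lower bound on $\Var[Z_n]$. Once this is in place, the chain of inequalities $\alpha<\gamma<\beta/2$ makes both the uniform boundedness of the summands and the Lindeberg check essentially automatic.
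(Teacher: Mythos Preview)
Your proof is correct and follows the paper's approach closely: you establish independence of the $\varphi_e^n$, bound $\Var[Z_n]$ below via Corollary~\ref{lem1} and Lemma~\ref{pep}, bound each centred summand above by $2n^\gamma$, and then invoke a triangular-array CLT. The only difference is in the final step: the paper verifies a Lyapunov-type $p$th-moment condition (choosing $p>(\beta/2-\gamma)^{-1}$ so that $n^{1+p(\gamma-\beta/2)}\to 0$), whereas you verify Lindeberg's condition directly by observing that the uniform bound $2n^\gamma$ is $o(\sqrt{\Var[Z_n]})$, which makes the truncated sum identically zero for large $n$; this is a slightly cleaner route to the same conclusion.
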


\begin{proof}
We can apply a version of the central limit theorem for double
arrays of random variables. The required theorem can be found in
\cite{serfling} and states the following. For each $n$, let
$X_{n1},\ldots,X_{nr}$ be $r=r(n)$ independent random variables with
finite $p$th moments for some $p>2$. Let

\[
A_n=\sum_{j}\mathbb{E}[X_{nj}];\quad B_n=\sum_j\Var[X_{nj}].
\]

If
\begin{equation}\label{cond}
B_n^{-p/2}\sum_j\mathbb{E}[|X_{nj}-\mathbb{E}[X_{nj}]|^p]\to0 \textnormal{ as } n\to\infty,
\end{equation}
then $W_n=(\sum_j X_{nj}-A_n)/\sqrt{B_n}  \xrightarrow{D} N(0,1)$.
We apply this theorem by taking
$\{X_{n1},\ldots,X_{nr}\}=\{\varphi_e^n:e\in E^n_{12}\}$, since the
random variables $\{\varphi_e^n:e\in E^n_{12}\}$ are clearly
independent. With our notation $Z_n=\sum_{e\in
E^n_{12}}\varphi_e^n$, we have $A_n=\mathbb{E}[Z_n]$,
$B_n=\Var[Z_n]$ and $W_n=\psi_n$. Thus, we only need to verify
condition (\ref{cond}) in order to establish Lemma \ref{lemmaketto}.

By Corollary~\ref{lem1}, we have:
\[
\Var[\varphi_e^n]\geq \sum_{f\in E_P(t_e)}\lambda_f^2P_f(1-P_f).
\]
This lower bound and the independence of $\{\varphi_e^n: e \in
E^n_{12}\}$,  implies:
$$B_n=\Var[Z_n]=\sum_{e\in E^n_{12}}\Var[\varphi_e^n] \geq \sum_{e\in E^n_{12}}\sum_{f\in E_P(t_e)}\lambda_f^2P_f(1-P_f)$$
Consequently, by Lemma~\ref{pep}, and the fact that every pendant edge
occurs in $E_P(t_e)$ for some $e \in E_{12}^n$ we obtain,
\begin{equation}
\label{bounding}
B_n\geq B\epsilon^2(1+o(1))n^{\beta}.
\end{equation}
Consider now the absolute central moments in (\ref{cond}). We have
\[
\mathbb{E}[|X_{nj}-\mathbb{E}[X_{nj}]|^p]=\mathbb{E}[|\varphi_e^n-\mathbb{E}[\varphi_e^n]|^p]\leq
L_e^p,
\]
where $L_e$ is the sum of the lengths of the edges of $t_e$. Since
$t_e$ has less than $2n_e$ edges, and the edge lengths are bounded
from above by $1$ (under (C2$^*$)) and $e\in E^n_{12}$ implies
$n_e\leq f(n)$, we obtain $L_e\leq 2n_e\leq 2f(n)$. Now we have

\begin{equation}
\label{bounding2}
\mathbb{E}[|\varphi_e^n-\mathbb{E}[\varphi_e^n]|^p]\leq 2^p f(n)^p.
\end{equation}

Combining the bounds (\ref{bounding}) and (\ref{bounding2}), and noting
that  $|E_{12}^n| \leq 2n$ and $f(n) = n^{\gamma}$ we obtain:

\[
\begin{split}
B_n^{-p/2}\sum_{e\in E^n_{12}}\mathbb{E}[|\varphi_e^n-\mathbb{E}[\varphi_e^n]|^p]&\leq
\frac{|E^n_{12}|2^pf(n)^p}{(B\epsilon^2(1+o(1)))^{p/2}n^{\beta p/2}}\\
&\leq C(p)n^{1+p(\gamma- \beta/2)},
\end{split}
\]
for some constant $C(p)>0$ independent of $n$. Now, since
$\gamma<\beta/2$, the exponent of $n$ in the obtained upper bound is
negative for any $p> (\beta/2- \gamma)^{-1}$. Since there are some
$p>2$ satisfying this inequality and consequently satisfying
condition (\ref{cond}), the proof of Lemma~\ref{lemmaketto} is
complete.
\end{proof}

We return to the proof of Theorem \ref{tetelegy}.  Using Equation
(\ref{kepletegy}) and the notation of Lemma \ref{lemmaketto}, we get

\[
\begin{split}
\frac{\varphi_n-\mathbb{E}[\varphi_n]}{\sqrt{\Var[\varphi_n]}}&=\frac{\lambda_n+Z_n-R_n-(\lambda_n+\mathbb{E}[Z_n]-\mathbb{E}[R_n])}{\sqrt{\Var[\varphi_n]}}\\
&=C_n\psi_n + D_n
\end{split}
\]
where $$C_n =\frac{\sqrt{\Var[Z_n]}}{\sqrt{\Var[\varphi_n]}} \mbox{
and } D_n = -\frac{R_n-\mathbb{E}[R_n]}{\sqrt{\Var[\varphi_n]}}.$$
By Lemma \ref{lem2} and the fact that $\Var[\varphi_n]$ does not
converge to $0$ (by Corollary~\ref{lem1} Lemma~\ref{pep} and
condition (C2$^*$)), we have:
\begin{equation}
\label{sl2} D_n \xrightarrow{P} 0.
\end{equation}
Moreover, by (\ref{kepletegy}),  $\Var[\varphi_n] = \Var[Z_n]+\Var[R_n]-2\Cov[Z_n,R_n]$, so that
$$C_n^{-2} - 1 = \frac{\Var[R_n]}{\Var[Z_n]} - 2\rho \frac{\sqrt{\Var[R_n]}}{\sqrt{\Var[Z_n]}},$$
where $\rho$ is the correlation coefficient of $R_n$ and $Z_n$.
Now, by Lemma~\ref{lem2} we have $\lim_{n \rightarrow \infty} \Var[R_n] = 0$. Thus, since $\Var[Z_n]$ is bounded away from $0$ (by (\ref{bounding})), and $\rho \in [-1,1]$ we have:
\begin{equation}
\label{sl1} \lim_{n \rightarrow \infty} C_n=1.
\end{equation}

To complete the proof of Theorem  \ref{tetelegy}, we apply Slutsky's
Theorem \cite{slutsky} which states that if $X_n, Y_n, W_n$ are
sequences of random variables, and $X_n \xrightarrow{P} a,$ $Y_n
\xrightarrow{P} b,$ (where $a,b$ are constants) and $W_n
\xrightarrow{D} W$ (for some random variable $W$) then $X_nW_n + Y_n
\xrightarrow{D} aW+b$. In our setting, we will take $X_n = C_n, Y_n =
D_n, W_n = \psi_n$, and $W= N(0,1)$ (the standard normal random
variable). The condition that $\psi_n \xrightarrow{D} N(0,1)$ was
established in Lemma~\ref{lemmaketto}, and the conditions $C_n
\xrightarrow{P} 1$, $D_n \xrightarrow{P} 0$ were established in
(\ref{sl1}) and (\ref{sl2}) (note that the convergence of a sequence of
real numbers in (\ref{sl1}) is just a special case of convergence in
probability). Thus,
$$(\varphi_n-\mathbb{E}[\varphi_n])/\sqrt{\Var[\varphi_n]} =
C_n\psi_n+D_n \xrightarrow{D} N(0,1),$$ which completes the proof of
Theorem \ref{tetelegy}.

\end{proof}

\section{Computing the PD distribution}
\label{algorithm}

In this section we describe an algorithm to calculate
the distribution of $\varphi_\cT$ efficiently under the g-FOB model.  An approximate distribution could also be obtained by simulation, but the approach we present here allows us to derive the {\em exact} distribution of
$\varphi_\cT$.  Note that we do not require
conditions (C1) or (C2) in this section.
We make the simplifying assumption that the edge lengths are
non-negative integer-valued, which implies that
$\varphi_\cT$ can only have values in the set
$\{0,1,\ldots,L\}$, where $L=PD(X)=\sum_{e}\lambda_{e}$.  This
assumption is not problematic in practice, as we can rescale all the
edge lengths so that they are (arbitrarily close to) integer
multiples of some small value.

We also assume that the input tree is such that
the root has one outgoing edge and all
other non-leaf vertices have exactly two outgoing edges.
This assumption does not affect the generality of our method, as any tree
can be modified to satisfy it, without changing the distribution for
$\varphi_\cT$: one can resolve multifurcations arbitrarily and
possibly insert an edge below the root, always assigning length 0 to the
newly introduced edges.

Consistent with the notation used before, $\varphi_e$ denotes the
contribution to $\varphi_\cT$ that comes from $e$ and the edges separated
from the root by $e$.
Then, for any edge $e$ and integer $x$, define
\[f_e(x):=\mathbb{P}[\varphi_e=x,\, Y_e=1].\]
Also recall that $P_e=\mathbb{P}[Y_e=1].$

Clearly, if $e$ is the only edge attached to the root of $\cT$, then
$f_e$ and $P_e$ are all that is needed to derive the
distribution of $\varphi_\cT$: simply observe that
\[\mathbb{P}[\varphi_\cT=x] =
\mathbb{P}[\varphi_e=x,\, Y_e=1] + \mathbb{P}[\varphi_e=x,\, Y_e=0] =
 f_e(x) + (1-P_e) \cdot I_{x=0},\]
where $I_{p}$ equals 0 or 1 depending on proposition $p$ being false
or true, respectively.

The algorithm then consists in doing a depth-first (bottom-up)
traversal of all the edges, so that each time an edge $e$ is
visited, the values of $P_e$ and $f_e(x)$, for all $x\in
\{\lambda_e,\lambda_e+1,\ldots,L\}$, are calculated using the
following recursions. We may then use the  $P_e$ and $f_e(x)$ values
of the root edge to calculate the distribution of $\varphi_\cT$.

\subsection*{Recursion for $f_e(x)$.}
\begin{itemize}
\item If $e$ leads into leaf $i$, then
\[f_e(x) \;=\; \mathbb{P}[\varphi_e=\lambda_e,\,Y_e=1]\cdot I_{x=\lambda_e} \;=\;
p_{i} \cdot I_{x=\lambda_e}.\]

\item If $e$ leads into the tail of edges $c$ and $d$, then
\begin{equation}
\label{eq:recursion_rooted}
    f_e(x)  =
    \sum_{i=\lambda_c}^{x-\lambda_e-\lambda_d}
    f_c(i)\cdot f_d(x-\lambda_e-i) +
    (1-P_d)\cdot f_c(x-\lambda_e) +
    (1-P_c)\cdot f_d(x-\lambda_e).
\end{equation}
\end{itemize}

Note that whenever the term $f_c(x-\lambda_e)$ with $x-\lambda_e < \lambda_c$
or the term $f_d(x-\lambda_e)$ with $x-\lambda_e < \lambda_d$
is used in Equation (\ref{eq:recursion_rooted}),
the algorithm will assume that its value is 0
and that therefore there is no need to calculate and store $f_e(x)$
for $x$ outside the range $\{\lambda_e, \lambda_e+1, \ldots , L\}$.

Equation (\ref{eq:recursion_rooted}) is easily proved; we have
\begin{eqnarray*}
f_e(x) & = &
\mathbb{P}[\varphi_e=x,\, Y_c=1,\, Y_d=1] +
\mathbb{P}[\varphi_e=x,\, Y_c=1,\, Y_d=0] +
\mathbb{P}[\varphi_e=x,\, Y_c=0,\, Y_d=1]\nonumber \\
& = &
\mathbb{P}[\varphi_c + \varphi_d = x-\lambda_e,\, Y_c=1,\, Y_d=1]\\
&  & +\, \mathbb{P}[\varphi_c = x-\lambda_e,\, Y_c=1,\, Y_d=0] +
\mathbb{P}[\varphi_d = x-\lambda_e,\, Y_c=0,\, Y_d=1]\nonumber
\end{eqnarray*}
where the second equality is obtained by restating event
$\varphi_e=x$ in terms of $\varphi_c$ and $\varphi_d$, which is
possible once we make assumptions on $Y_c$ and $Y_d$. Thus,
\begin{eqnarray*}
f_e(x) & = & \sum_{i=0}^{x-\lambda_e} \mathbb{P}\left[\varphi_c=i,\,
Y_c=1\right] \cdot \mathbb{P}\left[\varphi_d=x-\lambda_e-i,\,
Y_d=1\right] +
\\
&  &
\mathbb{P}[\varphi_c=x-\lambda_e,\, Y_c=1] \cdot \mathbb{P}[Y_d=0] +
\mathbb{P}[\varphi_d=x-\lambda_e,\, Y_d=1] \cdot \mathbb{P}[Y_c=0]
\nonumber \\
& = &
    \sum_{i=\lambda_c}^{x-\lambda_e-\lambda_d}
    f_c(i)\cdot f_d(x-\lambda_e-i) +
    (1-P_d)\cdot f_c(x-\lambda_e) +
    (1-P_c)\cdot f_d(x-\lambda_e).
\end{eqnarray*}
where the first equality is obtained by using the independence
between the survival events in $C(c)$ and $C(d)$. Note that in the first
expression in the second equality, the range of the sum has been
reduced, as $f_c(i) = 0$ for $i<\lambda_c$ and $f_d(x-\lambda_e-i) =
0$ for $x-\lambda_e-i<\lambda_d$.

\subsection*{Recursion for $P_e$.}
\begin{itemize}
\item If $e$ leads into leaf $i$, then $P_e = p_i$.
\item If $e$ leads into the tail of edges $c$ and $d$, then
$P_e = P_c + P_d - P_cP_d$.
\end{itemize}

\subsection*{Computational complexity}

For any given $e$, the calculation of $P_e$ is
done in $O(1)$ time, whereas that of each of the $f_e(x)$ values
requires $O(x)=O(L)$ time (see recursion (\ref{eq:recursion_rooted})),
giving a total of $O(L^2)$.
Calling $n$ the number of leaves in $\cT$, there are $2n-1$ edges in $\cT$
and the entire procedure takes $O(nL^{2})$ time.

A more efficient version of the algorithm can be obtained by
restricting the calculation of $f_e(x)$ to the values of $x \in
\{\lambda_e, \lambda_e+1, \ldots , L_e\}$, where $L_e$ is the
maximum value that $\varphi_e$ can attain (namely the sum of the
lengths of all the edges separated from the root by $e$, including $e$
itself). Note that the sum in (\ref{eq:recursion_rooted}) can then be further
restricted to the values of $i$ such that $i \leq L_c$ and
$x-\lambda_e-i \leq L_d$.  Using this more efficient algorithm,
it is easy to see that the calculation of all the $f_e(x)$ values
for a given internal edge $e$ takes $O(L_c L_d + L_e)$ time,
where $c$ and $d$ are the edges that $e$ leads into.
Noting that the sum of all the $L_c L_d$ terms, for all sister edges
$c$ and $d$, is bounded above by $L^2$, this shows that the
running time of the entire procedure is $O(L^2 + nL)$.
Since typically every pair of taxa in the tree is separated
by at least one edge of positive length, we have that  $n = O(L)$
and therefore the running time above is equivalent to $O(L^2)$.

Regarding memory requirements, note that each time we calculate the
information relative to $e$ (namely $P_e$ and $f_e(x)$), the
information relative to the edges it leads to (if any) can be
deleted, as it will never be used again. So, at any given moment the
information of at most $n$ `active' edges needs to be stored. If we
use the range restriction just described, the sizes of the $f_e(x)$
vectors for all the active edges sum to a number bounded above by
$n+L$, and therefore the algorithm requires $O(n+L)$ space,
equivalent to $O(L)$ if $n = O(L)$.

\section{Extension to unrooted PD}
\label{unroot}

There is a simple modification of the definition of
phylogenetic diversity that is also relevant in biology
(\cite{fai06a}, \cite{pardi}). Given a subset $X'$ of $X$, we can
evaluate the sum of the lengths of the edges in the minimum subtree
connecting (only) the leaves in $X'$. This score -- which we will
denote by $uPD(X')$ and refer to as the `unrooted PD' score of $X'$
-- is equivalent to $PD(X')$ if the path connecting two leaves in
$X'$ traverses the root of $\T$. However, in general, $uPD(X') \leq
PD(X')$ (Fig.~\ref{figure1b} shows an example where $uPD(X') <
PD(X')$).
\begin{figure}[ht] \begin{center} \label{figure1b}
\resizebox{13cm}{!}{
\input{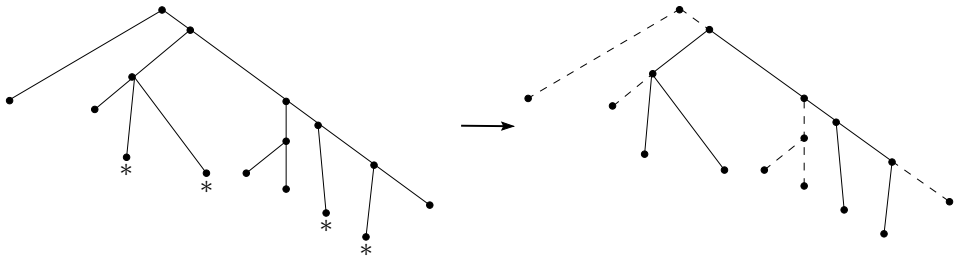}
} \caption{If only the taxa marked * in the tree on the left survive
then future `unrooted' phylogenetic diversity is the sum of the
lengths of the solid edges in the tree on the right. Notice that the
$uPD$ value in this example is less than the $PD$ value ({\em c.f.}
Fig. 1).}
\end{center}
\end{figure}
This alternative concept of phylogenetic diversity has the advantage
that it can be defined on either rooted or unrooted phylogenetic
trees. Of course, the g-FOB model is also defined naturally on
unrooted trees, and so it makes sense to consider the distribution of $uPD$ under the g-FOB model in this more general setting. A natural question is whether Theorem~\ref{tetelegy} is still valid,
(that is, is the future uPD of (rooted or unrooted) trees also
asymptotically normal under conditions (C1) and (C2)?). We now answer
this question (affirmatively) and also show how  to extend
the computation of the exact future PD distribution to unrooted
trees.

Let the random variable $\varphi'=\varphi'_\T$ denote the uPD score
of the random subset $X'$ of $X$ (consisting of those taxa that will
still exist at some time $t$ in the future). We call $\varphi'$ the
{\em future unrooted phylogenetic diversity}.  In this model, we have
\begin{equation}
\label{fundynoroot}
\varphi' = \sum_e \lambda_eY'_e,
\end{equation}
where $Y'_e$ is the binary random variable which takes the value $1$
if $e$ lies on at least one path between some pair of taxa in $X'$, and
which is $0$ otherwise. Moreover,
\begin{equation}
\label{fundy2noroot}
\mathbb{P}[Y'_e = 1] = (1 - \prod_{i \in X_1(e)}(1-p_i))(1 - \prod_{j \in X_2(e)}(1-p_j)),
\end{equation}
where $X_1(e)$ and $X_2(e)$ are the bipartition of $X$ consisting of
the two subsets of $X$ that are separated by edge $e$. Thus if we
let $P_i(e)$ denote the probability that at least one taxon in
$X_i(e)$ survives (for $i\in\{1,2\}$), then the  expected value of
$\varphi'$ (analogous to (\ref{fundy3})) is
\begin{equation}
\label{fundy3noroot} \mathbb{E}[\varphi'] =\sum_e\lambda_e P_1(e) P_2(e).
\end{equation}
Regarding $\Var[\varphi']$ there is an analogous formula to that given in
Lemma~\ref{fundy4}.

Consider now a sequence $\T_1,\T_2,\ldots,\T_n,\ldots$ of (rooted or
unrooted) phylogenetic trees where $\T_n$ has $n$ leaves, and assume
that this sequence satisfies conditions (C1) and (C2) when each
$\T_n$ has associated edge length and leaf survival probability
functions. It can be shown that Theorem~\ref{tetelegy} is still
valid for uPD; that is, under the same conditions,
$(\varphi'_n-\mathbb{E}[\varphi'_n])/\sqrt{\Var[\varphi'_n]}$
converges in distribution to $N(0,1)$ as $n\to\infty$.

To establish this asymptotic normality of $\varphi'_n$ under conditions (C1) and (C2$^*$) (and thereby (C1) and (C2)) requires slight modifications to the proof of Theorem~\ref{tetelegy}, and we now provide an outline of the argument. The main difference is that now each edge $e$ induces a bipartition $X=X_1(e)\cup X_2(e)$ of the taxon set and so we decompose $\T_n$ in a slightly different way. For simplicity, assume that $|X_1(e)|\leq |X_2(e)|$ and consider the following edge sets (the definition of the function $f(n)$ is as in the rooted case):\\

\noindent $E^n_1$: edges $e$ of $\T_n$ with $|X_1(e)|\leq f(n)$.\\
$E^n_2$: edges $e$ of $\T_n$ with $|X_1(e)|>f(n)$.\\
$E^n_{12}$: edges $e\in E^n_1$ such that $e$ is adjacent to an edge $f\in E^n_2$.\\

For $\varphi_n'$ we obtain the following equation:

\begin{equation}
\label{fundy5noroot}
\varphi_n' = \sum_{e\in E_1^n}\lambda_e Y_e'+ \sum_{e\in E_2^n}\lambda_e Y_e'=\sum_{e\in E_1^n}\lambda_e Y_e'+\lambda_n-R_n',
\end{equation}
where $\lambda_n=\sum_{e\in E_2^n}\lambda_e$ and $R_n'=\sum_{e\in
E_2^n}\lambda_e(1-Y_e')$. For an edge $e\in E_{12}^n$, let $t_e$
denote the subtree with root edge $e$ and with leaf set $X_1(e)$.
Let $(\varphi_e^n)'$ denote the contribution to $\varphi_n'$ by the edges
in $t_e$. Furthermore, let $\varphi_e^n$ be the rooted future
phylogenetic diversity of $t_e$, $Z_n=\sum_{e\in
E_{12}^n}\varphi_e^n$ as in the rooted case,
$W_e=\varphi_e^n-(\varphi_e^n)'$ and $V_n=\sum_{e\in E_{12}^n}W_e$. With
this notation, we get
\begin{equation}
\varphi_n' = \sum_{e\in E_{12}^n}(\varphi_e^n)'+\lambda_n-R_n'=\sum_{e\in E_{12}^n}\varphi_e^n-\sum_{e\in E_{12}^n}W_e+\lambda_n-R_n'=Z_n-V_n+\lambda_n-R_n'.
\end{equation}

Now we can apply Lemma~\ref{lemmaketto} and Slutsky's Theorem to complete the proof.

\subsection{Computing the uPD distribution}
Finally we show how the algorithm described in Section~\ref{algorithm} for
computing the PD distribution can be modified to calculate the distribution of
unrooted PD.
As before, we assume the edge lengths are non-negative integers and we
preprocess $\cT$ (possibly rooting it in an arbitrary vertex) so that the
number of outgoing edges is 1 for the root and 2 for all the other non-leaf
vertices. Since $\cT$ is now rooted, $C(e)$ and the random variables $Y_e$
are well defined.
We also define $\varphi'_e$ as the uPD of the surviving taxa in $C(e)$.
Then, for any integer $x$, define\[f'_e(x):=\mathbb{P}[\varphi'_e=x,\,Y_e=1].\]

As before, if $e$ is the root edge of $\cT$, then
$f'_e$ and $P_e$ are sufficient to derive the
distribution of $\varphi'_\cT$:
\[\mathbb{P}[\varphi'_\cT=x]=f'_e(x)+(1-P_e) \cdot I_{x=0}.\]

An algorithm to calculate the distribution of $\varphi'_\cT$ can be obtained
with a simple modification of the algorithm for $\varphi_\cT$:
for each edge $e$, in addition to calculating $P_e$ and $f_e(x)$, also
calculate $f'_e(x)$, for all $x\in \{0,1,\ldots,L\}$.
For this purpose, the following recursion is used
(note that the $f'_e$ values may depend on $f_c$ and $f_d$ as well as on
$f'_c$ and $f'_d$, which is why we retain the calculation of the $f_e$
values even though they are not directly implicated in determining
$\mathbb{P}[\varphi'_\cT=x]$).

\subsection*{Recursion for $f'_e(x)$.}

\begin{itemize}
\item If $e$ leads into leaf $i$, then
\[f'_e(x) \;=\; p_i \cdot I_{x=0}.\]

\item If $e$ leads into the tail of edges $c$ and $d$, then
\begin{equation}
f'_e(x) =
\sum_{i=\lambda_c}^{x-\lambda_d}
f_c(i) \cdot f_d(x-i) +
(1-P_d) \cdot f'_c(x) +
(1-P_c) \cdot f'_d(x),
\end{equation}
\end{itemize}

which is proved in a way similar to (\ref{eq:recursion_rooted}):
\begin{eqnarray*}
f'_e(x)
& = &
\mathbb{P}[\varphi'_e=x,\, Y_c=1,\, Y_d=1] +
\mathbb{P}[\varphi'_e=x,\, Y_c=1,\, Y_d=0] +
\mathbb{P}[\varphi'_e=x,\, Y_c=0,\, Y_d=1]\\
& = &
\mathbb{P}[\varphi_c+\varphi_d=x,\, Y_c=1,\, Y_d=1] +
\mathbb{P}[\varphi'_c=x,\, Y_c=1,\, Y_d=0] +
\mathbb{P}[\varphi'_d=x,\, Y_c=0,\, Y_d=1]\\
& = &
\sum_{i=\lambda_c}^{x-\lambda_d}
f_c(i) \cdot f_d(x-i) +
(1-P_d) \cdot f'_c(x) +
(1-P_c) \cdot f'_d(x).
\end{eqnarray*}

\section{Concluding remarks}

The main result of this paper (Theorem~\ref{tetelegy}) has been to
establish a limiting normal distribution for future PD on large
phylogenetic trees.  This theorem assumes an underlying generalized
`field of bullets' model, and imposes two further mild conditions
(conditions (C1) and (C2)).  In this setting Theorem~\ref{tetelegy}
reduces the problem of computing the distribution of future PD to
that of determining just two parameters -- its mean and variance --
and these can be readily computed by Equation (\ref{fundy3}) and
Lemma~\ref{fundy4}. Using the resulting normal distribution one can
easily compute the probability under the model that future PD will
fall below any given critical value. This may also be helpful in
designing strategies to minimize this probability (analogous to the
`Noah's Ark problem' which tries to maximize expected future PD).

In practice, the use of a normal distribution based on
Theorem~\ref{tetelegy} requires that the number of taxa is moderate
($>50$), that the survival probabilities are not too extreme
(condition (C1)), and that the length of the pendant edges on
average are not too small in relation to the largest edge length in
the tree (condition (C2)). If these conditions are violated, it would
be prudent to use the exact algorithm we have described in the
paper, as this requires neither a large number of taxa nor condition
(C1) or (C2). To apply this algorithm may involve some small
adjustment to the edge lengths to make them integral multiples of
some common value.

Regarding the `mild conditions' for Theorem~\ref{tetelegy} (namely
(C1) and (C2)), we showed that neither can be dropped completely from
the statement of the theorem. However it is likely that both
conditions could be weakened somewhat, though at the risk of
complicating the description of the conditions and the proof of
Theorem \ref{tetelegy}.

It would be interesting to explore other extinction models that
weaken the strong assumption in the g-FOB model that taxon
extinction events are independent. One such model would regard
extinction as a continuous-time Markov process in which the
extinction rate of a taxon $i$ at any given time $t$ is the product
of an intrinsic extinction rate $r_i$, with a factor that depends on
the set of species in the tree that are extant at time $t$.  In
general, such processes could be very complex, so a first step would
be to identify a simple model that nevertheless captures more
biological realism than the g-FOB model.

\section{Appendix}

\noindent{\em Condition (C2) is satisfied in expectation for
trees generated under a continuous-time pure-birth model.}

Consider a model where each lineages persist for a random period of
time before speciating, and that these persistence times are i.i.d.
random variables with exponential distribution with mean $s>0$ (This
model is sometimes called the `Yule model' in phylogenetics).  Now
suppose we sample the process at some time during which the tree has
$n$ leaves. Let $\T_n$ denote this tree (with its associated edge
lengths) and let $\mu_P(n)$ denote the average length of the pendant
edges of $\T_n$.

\begin{proposition}
Under a constant birth model with speciation rate $s$ and $\beta \in
(0,1)$ there is a constant $B>0$ for which the expected value of
$$\sum_{e \in E_P(\T_n)}\left(\lambda_e^{(n)}\right)^2 - B
n^{\beta}L(n)^2$$ is strictly positive for all $n \geq 3$.
\end{proposition}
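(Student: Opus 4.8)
The plan is to bound the two expectations in opposite directions: to show that $\mathbb{E}\big[\sum_{e\in E_P(\T_n)}(\lambda_e^{(n)})^2\big]$ grows at least linearly in $n$, while $\mathbb{E}[L(n)^2]$ grows only like $(\log n)^2$. Since $\beta<1$, the quantity $Bn^{\beta}\mathbb{E}[L(n)^2]$ is then $o(n)$ for every fixed $B$, so the expectation in the statement is positive for all large $n$; the remaining finitely many values of $n$ get absorbed by choosing $B$ small.

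First I would reduce the lower bound to a first-moment computation. Since $\T_n$ has exactly $n$ pendant edges, Cauchy--Schwarz gives $\sum_{e\in E_P}(\lambda_e)^2\ge\frac1n\big(\sum_{e\in E_P}\lambda_e\big)^2$, and Jensen's inequality (convexity of $x\mapsto x^2$) then gives
\[
\mathbb{E}\Big[\sum_{e\in E_P}(\lambda_e)^2\Big]\ \ge\ \frac1n\Big(\mathbb{E}\Big[\sum_{e\in E_P}\lambda_e\Big]\Big)^2 ,
\]
so it suffices to show $\mathbb{E}\big[\sum_{e\in E_P}\lambda_e\big]=\Theta(n)$. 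This I would obtain from the standard epoch decomposition of the Yule process: partition time into the intervals during which the growing tree has exactly $k$ lineages; the interval with $k$ lineages has length $W_k$, an exponential variable with mean $s/k$, the $W_k$ are independent and jointly independent of the labelled topology, and a lineage born at the start of this interval survives (i.e. does not speciate) to become one of the $n$ leaves with probability $\prod_{j=k}^{n-1}\frac{j-1}{j}=\frac{k-1}{n-1}$. Summing the expected durations $s/j$ over the intervals traversed by surviving leaf lineages gives a closed form of order $sn$ (one finds $\mathbb{E}\big[\sum_{e\in E_P}\lambda_e\big]=\tfrac{s(n-2)}{2}$), hence $\mathbb{E}\big[\sum_{e\in E_P}(\lambda_e)^2\big]\ge\tfrac{s^2(n-2)^2}{4n}$, which is strictly positive for $n\ge3$ and $\sim\tfrac{s^2n}{4}$ as $n\to\infty$.

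For the upper bound, every edge of $\T_n$ is a (possibly truncated) lifespan of a single lineage and so has length at most the total height $H_n=\sum_{k=1}^{n-1}W_k$ of the tree. As the $W_k$ are independent with mean $s/k$ and variance $s^2/k^2$, $\mathbb{E}[L(n)^2]\le\mathbb{E}[H_n^2]=\Var(H_n)+\mathbb{E}[H_n]^2\le\tfrac{\pi^2}{6}s^2+s^2(1+\ln n)^2=O((\log n)^2)$. Combining,
\[
\mathbb{E}\Big[\sum_{e\in E_P(\T_n)}(\lambda_e^{(n)})^2-Bn^{\beta}L(n)^2\Big]\ \ge\ \frac{s^2(n-2)^2}{4n}-B\,n^{\beta}\,O\big((\log n)^2\big),
\]
and since $\beta<1$ the right-hand side is positive for all $n\ge N_0$, with $N_0$ independent of $B$ once $B\le1$. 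For $n\in\{3,\dots,N_0-1\}$ we have $0<\mathbb{E}[L(n)^2]<\infty$ while the pendant sum of squares has strictly positive expectation, so shrinking $B$ further makes the difference positive there too; one such $B>0$ then works for all $n\ge3$.

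The hard part is the linear lower bound on $\mathbb{E}\big[\sum_{e\in E_P}(\lambda_e)^2\big]$: the pendant edge lengths are strongly dependent (they share the epoch durations $W_k$, and typically most of them are short while a few are long), so one cannot just compute $\mathbb{E}[\lambda_e^2]$ for a representative pendant edge and multiply by $n$. The Cauchy--Schwarz/Jensen step is precisely what circumvents this, trading the awkward second-moment and correlation structure for the elementary first-moment count $\mathbb{E}\big[\sum_{e\in E_P}\lambda_e\big]$; the rest is routine Yule-process bookkeeping.
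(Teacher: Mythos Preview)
Your argument is correct and follows the same overall architecture as the paper: reduce the pendant sum of squares to a first moment via Cauchy--Schwarz and Jensen, show that $\mathbb{E}\big[\sum_{e\in E_P}\lambda_e\big]=s(n-2)/2$ so the lower bound grows linearly in $n$, and show $\mathbb{E}[L(n)^2]=O((\log n)^2)$ so $Bn^\beta\mathbb{E}[L(n)^2]=o(n)$; then absorb the finitely many small cases by shrinking $B$.

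The only differences are in the execution of the two estimates. For the pendant first moment, the paper sets up a recursion for $\mathbb{E}[S_n]$ (the pendant sum at the instant the process first has $n$ leaves) and solves it to get $s(n-2)/2$, whereas you obtain the same value directly from the epoch decomposition; either route is clean. For $L(n)$, the paper bounds the longest edge by extending each pendant edge until its next speciation, which makes \emph{all} edge lengths i.i.d.\ exponentials with mean $s$, and then bounds the maximum of $2n-1$ such variables. Your bound $L(n)\le H_n=\sum_{k=1}^{n-1}W_k$ is more elementary, but as written it assumes the tree is sampled exactly when it first reaches $n$ leaves; the paper explicitly allows sampling at \emph{some} time while there are $n$ leaves, in which case the height can exceed $\sum_{k=1}^{n-1}W_k$. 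This is a cosmetic issue---replacing $H_n$ by $\sum_{k=1}^{n}W_k$ (or simply noting that the extra epoch has mean $s/n$) repairs it without affecting the $O((\log n)^2)$ conclusion. The paper's extension trick has the advantage of being entirely insensitive to the sampling time.
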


\begin{proof}
By using the inequality $\sum_{i=1}^n x_i^2 \geq
\frac{1}{n}(\sum_{i=1}^n x_i)^2$ we have $${\mathbb E}[\sum_{e \in
E_P(\T_n)}\left(\lambda_e^{(n)}\right)^2 - B n^{\beta}L(n)^2] \geq n
{\mathbb E}[\mu_P(n)]^2 - Bn^{\beta}{\mathbb E}[L(n)^2]$$ and the
proposition follows (by choice of a sufficiently small value of
$B>0$) once we establish the following two results.
\begin{itemize}
\item[(i)] ${\mathbb E}[\mu_P(n)] \geq s/6$ for all $n \geq 3$.
\item[(ii)] $n^{-\eta}{\mathbb E}[L(n)^2] \rightarrow 0$ as $n \rightarrow
\infty$, for any $\eta>0$.
\end{itemize}

To establish result (i), let $S_n$ denote the sum of the lengths of
the pendant edges of $\T_n$ up till the point when the number of
species first changes from $n-1$ to $n$, and excluding the (length
of the) pendant edge on which this speciation event occurs. Thus
$S_n$ is a sum of lengths of $n-2$ pendant edges.  [For example,
$S_3$ has an exponential distribution with mean $s/2$, as it is the
length of the edge that does not first speciate, up until the time
when one of the two edges in the tree first speciates]. Since we are
observing the tree $\T_n$ at some later time (but while it still has
$n$ leaves) then we clearly have:
\begin{equation}
\label{muP} \mu_P(n) \geq \frac{1}{n}{S_n}.
\end{equation}
We will derive a recursion for the sequence $({\mathbb E}[S_n], n =
3,4, \ldots)$.  Let $\theta_n$ be an exponentially-distributed
random variable with mean $s/n$.  Now, the random variable $S_{n+1}$
takes the value $S_n + (n-1)\theta_n$, with probability $2/n$ (this
is the case where the next speciation event occurs on one of the two
edges that develop from the last speciation event). Otherwise (and
so with probability $1-2/n$),  $S_{n+1}$ takes the value $S_n +
(n-1)\theta_n - \lambda_e$, where $\lambda_e$ is the length of one
of the $n-2$ pendant edges that contribute to $S_n$ (selected
uniformly at random from this set of edges).

Consequently,
\begin{eqnarray*}
{\mathbb E}[S_{n+1}] & = & \frac{2}{n}({\mathbb E}[S_{n}] +
 (n-1)\frac{s}{n}) + (1-\frac{2}{n})({\mathbb
 E}[S_{n}](1-\frac{1}{n-2}) +(n-1)\frac{s}{n}) \\
& = & \frac{n-1}{n}({\mathbb E}[S_n]+s).
\end{eqnarray*}
By using the initial condition ${\mathbb E}[S_3]=s/2$, and this
recursion, we have that ${\mathbb E}[S_n] = ns/2-s$ for all $n \geq
3$. Taking expectations on both sides of inequality~(\ref{muP})
gives
$$
{\mathbb E}[\mu_P(n)] \geq \frac{1}{n}(ns/2-s) = \frac{s}{2} -
\frac{s}{n} \geq \frac{s}{6},
$$
for all $n \geq 3$, thus proving (i).

To establish result (ii), observe that length of the longest edge in
$\T_n$ (namely $L(n)$) is bounded above by the length of the longest
edge in the tree obtained from $\T_n$ by allowing each leaf to
evolve until it next speciates. Now, the lengths of the edges in
this resulting trees is a set of $|E(\T_n)|$ independent random
variables each having an exponential distribution with mean $s$
(here $|E(\T_n)|$ is the number of edges of $\T_n$, which is at most
$2n-1$). Thus, if we let $Y_n$ be the maximum of $2n-1$ i.i.d
exponentially-distributed random variables, each with mean $s$, then
$L(n) \leq Y_n$. Moreover, for any $x>0$ we have: \begin{equation}
\label{ineq1} {\mathbb E}[Y_n^2] =  \int_{0}^{\infty}{\mathbb P}[Y_n^2
> y] dy \leq x^2 + \int_{x^2}^{\infty}{\mathbb P}[Y_n^2 > y] dy,
\end{equation} where the first equality in (\ref{ineq1}) is a standard identity in
probability theory for any non-negative random variable $Y_n^2$.
Now, by Boole's inequality, $${\mathbb P}[Y_n^2> y] ={\mathbb P}[Y_n> \sqrt{y}]  \leq (2n-1)
\exp(-\sqrt{y}/s).$$ Making the substitution $y = x^2+t^2$,
and applying the inequality $\sqrt{x^2+t^2} \geq \frac{x+t}{\sqrt{2}}$ we
obtain $$\int_{x^2}^{\infty}{\mathbb P}[Y_n^2 > y] dy \leq
2(2n-1)\exp(-\frac{x}{s\sqrt{2}}) \int_{t=0}^\infty t\exp(-\frac{t}{s\sqrt{2}})
dt.$$ Thus, taking $x= cs\log(n)$, for any $c> \sqrt{2}$, in (\ref{ineq1}) we obtain
$${\mathbb E}[L(n)^2] \leq {\mathbb E}[Y_n^2] \leq c^2s^2(\log(n))^2 + o(1),$$
(where $o(1)$ is a term that tends to $0$ as $n \rightarrow \infty$)
and result (ii) now follows.
\end{proof}


\begin{thebibliography}{99}




\addcontentsline{toc}{chapter}{References}



\bibitem{slutsky} Durrett, R., 1991.  Probability: Theory and Examples,
Wadsworth and Brooks/Cole, Belmont, California.

\bibitem{yule}   Edwards, A.W.F., 1970. Estimation of the branch points of a branching diffusion
process. (With discussion.). J. Roy. Statist. Soc. Ser. B.,
32, 155--174.

\bibitem{fai92} Faith, D.P., 1992. Conservation evaluation and phylogenetic
diversity. Biol. Conserv. 61, 1–-10.

\bibitem{fai06} Faith, D.P., 2006. The role of the phylogenetic diversity measure, PD, in
bio-informatics: Getting the definition right. Evol. Bioinf. Online.


\bibitem{fai06a} Faith, D.P., Baker, A.M., 2006. Phylogenetic diversity (PD) and biodiversity conservation: some bioinformatics challenges. Evol. Bioinf. Online.

\bibitem{hart06}  Hartmann, K., Steel, M., 2006.
Maximimizing phylogenetic diversity in biodiverstity conservation:
greedy solutions to the Noah's Ark problem. Syst. Biol.
55(4), 644--651.

\bibitem{hart} Hartmann, K., Steel, M., 2007. Phylogenetic diversity: From combinatorics to ecology.
In: Reconstructing evolution: New mathematical and computational
approaches (eds. O. Gascuel and M. Steel), Oxford University Press, Oxford, UK.

\bibitem{moo} Mooers, A.O., Heard, S.B., Chrostowski, E., 2005.
Evolutionary heritage as a metric for conservation. Pages 120–-138
in Phylogeny and conservation (A. Purvis, T. Brooks, and J.
Gittleman, eds.). Cambridge University Press, Cambridge, UK.

\bibitem{nee} Nee, S., May, R. M.,  1997. Extinction and the
loss of evolutionary history. Science, 278(5338), 692–-694.

\bibitem{pardi}
Pardi, F., Goldman, N., 2007. Resource-aware taxon selection for
maximizing phylogenetic diversity, Syst. Biol. 56(3), 431–-444.


\bibitem{pur}  Purvis, A., Agapow, P-M., Gittleman, J.L., Mace, G.M., 2000. Nonrandom extinction and the loss of
evolutionary history. Science 288,  328--330.


\bibitem{raup} Raup, D.M.,  1993.  Extinction: bad genes or bad luck? Oxford Univ.
Press, Oxford.

\bibitem{serfling} Serfling, R.J., 1980. Approximation theorems of mathematical statistics, Wiley, New York.


\bibitem{vaz} Vazquez, D.P., Gittleman, J.L., 1998.
Biodiversity conservation: Does phylogeny matter? Current Biol.
8, 379--381.


\bibitem{noah} Weitzman, M.L., 1998. The Noah's ark problem. Econometica, 66(6), 1279--1298.

\end{thebibliography}
\end{document}